\documentclass[envcountsect,envcountsame]{llncs}
\usepackage{amssymb,stmaryrd}
\usepackage{amsmath}
\usepackage{epsfig}
\usepackage{makeidx}
\usepackage{enumitem}
\usepackage{array}
\usepackage{algorithmic}
\usepackage[ruled]{algorithm}

\algsetup{indent=2em}
\renewcommand{\algorithmicendif}{\textbf{fi}}

\newcommand{\OLIF}[2]
{\STATE \algorithmicif \mbox{ #1 }\algorithmicthen\mbox{ #2 }\algorithmicendif}

\pagestyle{plain}
\pagenumbering{arabic}

\begin{document}

\title{\mbox{Generalized Maneuvers in Route Planning}}
\author{Petr Hlin\v{e}n\'y ~\and~ Ondrej Mori\v{s}}

\institute{Faculty of Informatics, Masaryk University \\
  Botanick\'a 68a, 602 00 Brno, Czech Republic  \\
  \email{hlineny@fi.muni.cz, xmoris@fi.muni.cz}}

\maketitle

\begin{abstract}
We study an important practical aspect of the route planning problem in 
real-world road networks -- \emph{maneuvers}. Informally, maneuvers 
represent various irregularities of the road network graph such as
turn-prohibitions, traffic light delays, round-abouts, forbidden passages 
and so on. We propose a generalized model which can handle arbitrarily complex
(and even negative) maneuvers, and outline how to enhance Dijkstra's 
algorithm in order to solve route planning queries in this model without 
prior adjustments of the underlying road network graph.
\end{abstract}

\section{Introduction}

Since mass introduction of GPS navigation devices, the \emph{route planning 
problem}, has received considerable attention. This problem is in fact
an instance of the well-known single pair shortest path (SPSP) problem
in graphs representing real-world road networks. However, it involves 
many challenging difficulties compared to ordinary SPSP. Firstly, classical 
algorithms such as Dijkstra's \cite{Dijkstra1959}, A* \cite{Hart1972} or their 
bidirectional variants \cite{Pohl1969} are not well suited for the route
planning despite their optimality in wide theoretical sense. It is mainly 
because graphs representing real-world road networks are so huge that even 
an algorithm with linear time and space complexity cannot be feasibly run 
on typical mobile devices. 

Secondly, these classical approaches disregard certain important aspects of
real-world road networks, namely route restrictions, traffic regulations, or
actual traffic info. Hence a route found by such algorithms might not be 
optimal or not even feasible. Additional attributes are needed in this regard.

The first difficulty has been intensively studied in the past decade, and 
complexity overheads of classical algorithms have been largely improved by 
using various preprocessing approaches. For a brief overview, we refer the 
readers to \cite{Cherkassky1994,Delling2009B,Schultes2008} or our 
\cite{HM2011A}. In this paper we focus on the second mentioned difficulty 
as it is still receiving significantly less attention.

\subsubsection{Related Work. } 
The common way to model required additional attributes of road networks is
with so called {\em maneuvers}\/; Definition~\ref{def:maneuver}. Maneuvers 
do not seem to be in the center of interest of route-planning research papers:
They are either assumed to be encoded into the underlying graph of a road 
network, or they are addressed only partially with rather simple types of 
restriction attributes such as turn-penalties and path prohibitions.

Basically, the research directions are represented either by modifications 
of the underlying graph during preprocessing 
\cite{Jiang2002,Pallottino1997,Ziliaskopoulos1996}, or by adjusting a query 
algorithm \cite{Kirby1969,Villeneuve2005} in order to resolve simple types 
of restrictions during queries. 

The first, and seemingly the simplest, solution is commonly used as it makes 
a~road network graph maneuver-free and so there is no need to adjust the 
queries in any way. Unfortunately, it can significantly increase the size 
of the graph \cite{Winter2002}; for instance, replacing a single 
turn-prohibition can add up to eight new vertices in place of one 
original~\cite{Gutierrez2008}. A solution like this one thus conflicts with 
the aforementioned (graph-size) objectives. Another approach \cite{Anez1996} 
uses so-called dual graph representation instead of the original one, where 
allowed turns are modeled by dual edges. 

To summarize, a sufficiently general approach for arbitrarily complex 
maneuvers seems to be missing in the literature despite the fact that such 
a solution could be really important. We would like to emphasize that all 
the cited works suffer from the fact that they consider only ``simple'' types 
of maneuvers.

\vspace{-1.5ex}
\subsubsection{Our Contribution. } Firstly, we introduce a formal model of a 
generic maneuver -- from a single vertex to a long self-intersecting walk 
-- with either positive or negative effects (penalties); being enforced, 
recommended, not recommended or even prohibited. Our model can capture 
virtually any route restriction, most traffic regulations and even some 
dynamic properties of real-world road networks. 

Secondly, we integrate this model into Dijkstra's algorithm, rising its 
worst-case time complexity only slightly (depending on a structure of 
maneuvers). The underlying graph is not modified at all and no preprocessing 
is needed. Even though our idea is fairly simple and relative easy to 
understand, it is novel in the respect that no comparable solution has been 
published to date. Furthermore, some important added benefits of our 
algorithm are as follows:

\begin{itemize}
\parskip 2pt
\item It can be directly used bidirectionally with any alternation strategy
  using an appropriate termination condition; it can be extended also to the 
  A* algorithm by applying a ``potential function to maneuver effects''.
  
\item Many route planning approaches use Dijkstra or A* in the core of their 
  query algorithms, and hence our solution can be incorporated into many of 
  them (for example, those based on a reach, landmarks or various types of 
  separators) quite naturally under additional assumptions.

\item Our algorithm tackles maneuvers ``on-line'' -- that is\ no maneuver 
  is processed before it is reached. And since the underlying graph of a 
  road network is not changed (no vertices or edges are removed or added),
  it is possible to add or remove maneuvers dynamically even during queries
  to some extent.
\end{itemize}

\section{Maneuvers: Basic Terms}
\label{sec:maneuvers}

A \emph{(directed) graph} $G=(V,E)$ is a pair of a finite set $V$ of vertices 
and a finite multiset $E \subseteq V \times V$ of edges (self-loops and 
parallel edges are allowed). The vertex set of $G$ is referred to as $V(G)$, 
its edge multiset as $E(G)$. \emph{A~subgraph} $H$ of a graph $G$ is denoted 
by $H \subseteq G$. 

\emph{A walk} $P \in G$ is an alternating sequence of vertices and 
edges $(u_0,e_1,u_1,\ldots,$ $e_k,u_k) \subseteq G$ such that $e_i = 
(u_{i-1},u_i)$ for $i = 1,\ldots,k$, the multiset of all edges of a walk $P$
is denoted by $E(P)$. \emph{A concatenation} $P_1 .\,P_2$ of 
walks $P_1=(u_0,e_1,u_1,\ldots,e_k,u_k)$ and $P_2=(u_k,e_{k+1},u_{k+1},\ldots,
e_l,u_l)$ is the walk $(u_0,e_1,u_1,$ $\ldots,e_k,u_k,e_{k+1},\ldots,e_l,u_l)$. 
If $P_2=(u,f,v)$ represents a single edge, we write~$P_1.\,f$. If edges are 
clear from the graph, then we write a walk simply as $(u_0,u_1,\dots,u_k)$.

A walk $Q$ is a {\em prefix} of another walk $P$ if $Q$ is a subwalk of
$P$ starting with the same index, and analogically with {\em suffix}.
The \emph{prefix set} of a walk $P=(u_0,e_1,\ldots,e_k,u_k)$ is $\mathit{Prefix}
(P) = \{(u_0,e_1,\ldots,e_i,u_i) |\> 0 \le i \le k\},$ and analogically 
$\mathit{Suf\!fix}(P)=\{(u_i,e_{i+1},\ldots,e_k,u_k) |\>  0 \le i\le k\}$. 
A prefix (suffix) of a walk $P$ thus is a member of $\mathit{Prefix}(P)$
($\mathit{Suf\!fix}(P)$), and it is {\em nontrivial} if $i\geq1$.

\emph{The weight} of a walk $P \subseteq G$ with respect to a weighting $w: 
E(G) \mapsto \mathbb{R}$ of $G$ is defined as $\sum_{e \in E(P)} w(e)$ and 
denoted by $|P|_w$. \emph{A distance} from $u$ to $v$ in $G$, $\delta_w(u,v)$, 
is the minimum weight of a walk $P=(u, \ldots, v) \subseteq G$ over all such 
walks and $P$ is then called \emph{optimal} (with respect to weighting $w$). 
If there no such walk then $\delta_w(u,v) = \infty$. \emph{A~path} is a walk 
without repeating vertices and edges.

Virtually any route restriction or traffic regulation in a road network, 
such as turn-prohibitions, traffic lights delays, forbidden passages, 
turn-out lanes, suggested directions or car accidents by contrast, can be 
modeled by \emph{maneuvers} -- walks having extra (either positive or negative) 
``cost effects''. Formally:

\begin{definition}[Maneuver]
\label{def:maneuver}
\emph{A maneuver} $M$ of $G$ is a walk in $G$ that is assigned a penalty 
$\Delta(M)\! \in\! \mathbb{R}\!\cup\! \infty$. A set of all maneuvers of $G$ is 
denoted by ${\cal{M}}$.
\end{definition}

\begin{remark}
A maneuver with a negative or positive penalty is called \emph{negative} 
or \emph{positive}, respectively. Furthermore, there are two special kinds 
of maneuvers the \emph{restricted} ones of penalty 0 and the \emph{prohibited} 
ones of penalty~$\infty$. 
\end{remark}

The cost effect of a maneuver is formalized next:
\begin{definition}[Penalized Weight]
\label{def:penalized_weight}
Let $G$ be a graph with a weighting $w$ and a set of maneuvers $\cal{M}$. 
The \emph{penalized weight} of a walk $P \subseteq G$ containing the maneuvers 
$M_1,\ldots,M_r \in {\cal{M}}$ as subwalks is defined as $|P|_w^{\cal{M}} = 
|P|_w + \sum_{i=1}^{r} \Delta(M_i)$.
\end{definition}

Then, the intended meaning of maneuvers in route planning is as follows.

\begin{itemize}
\parskip2pt
\item If a driver enters a restricted maneuver, she must pass it completely
  (cf.~Definition~\ref{def:valid_walk}); 
  she must obey the given direction(s) regardless of the cost effect.
  Examples are headings to be followed or specific round-abouts.

\item By contrast, if a driver enters a prohibited maneuver, she must 
  not pass it completely. She must get off it before reaching its end, 
  otherwise it makes her route infinitely bad. Examples are forbidden 
  passages or temporal closures.

\item Finally, if a driver enters a positive or negative maneuver, she is
  not required to pass it completely;
  but if she does, then this will increase or decrease the 
  cost of her route accordingly. Negative maneuvers make her route 
  better (more desirable) and positive ones make it worse.
  Examples of positive maneuvers are, for instance, traffic lights delays, lane
  changes, or left-turns.
  Examples of negative ones are turn-out lanes, shortcuts, or implicit routes.
\end{itemize}

\begin{definition}[Valid Walk]
\label{def:valid_walk}
Let $G,w,\cal{M}$ be as in Definition~\ref{def:penalized_weight}. A walk 
$P$ in $G$ is \emph{valid} if and only if $|P|_w^{\cal{M}} < \infty$ and, for any
restricted maneuver $M\in\cal{M}$, it holds that if a nontrivial prefix of
$M$ is a subwalk of $P$, then whole $M$ is a~subwalk of $P$ or a suffix of $P$
is contained in~$M$ (that is $P$ ends there).
\end{definition}

We finally get to the summarizing definition. A structure of a road network 
is naturally represented by a graph $G$ such that the junctions are 
represented by $V(G)$ and the roads by $E(G)$. The chosen cost function 
(for example travel time, distance, expenses) is represented by a \emph{non-negative}
weighting $w: E(G) \mapsto\mathbb{R}^+_0$ assigned to $G$, and the additional 
attributes such as traffic regulations are represented by maneuvers as above.
We say that two walks $Q_1,Q_2$ are {\em divergent} if, up to symmetry
between $Q_1,Q_2$, a nontrivial prefix of $Q_1$ is contained in $Q_2$ but
the whole $Q_1$ is not a subwalk of~$Q_2$. Moreover, we say that $Q_2$ 
{\em overhangs} $Q_1$ if a~nontrivial prefix of $Q_2$ is a suffix of~$Q_1$
(particularly, $E(Q_1)\cap E(Q_2)\neq \emptyset$).

\begin{definition}[Road Network]
\label{def:road_network}
Let $G$ be a graph with a non-negative weighting $w$ and a set of maneuvers 
$\cal M$. \emph{A road network} is the triple $(G,w,{\cal M})$. 
Furthermore, it is called \emph{proper} if:
\begin{itemize}
\parskip 2pt
\item[i.] no two restricted maneuvers in $\cal M$ are divergent,
\item[ii.] no two negative maneuvers in $\cal M$ overhang one another, and
\item[iii.] for all $N \in {\cal{M}}$, $\Delta(N) \geq -|N|_w^{{\cal{M}} 
    \setminus \{N\}}$ (that is, the penalized weight of every walk in $G$ is 
  non-negative). %($\forall$ walks $P \subseteq G$, $|P|_w^{\cal{M}} \ge 0$).
\end{itemize}
\vspace{-4pt}
\end{definition}

\begin{figure}[t]
  \centering
  \centerline{\epsfig{file=./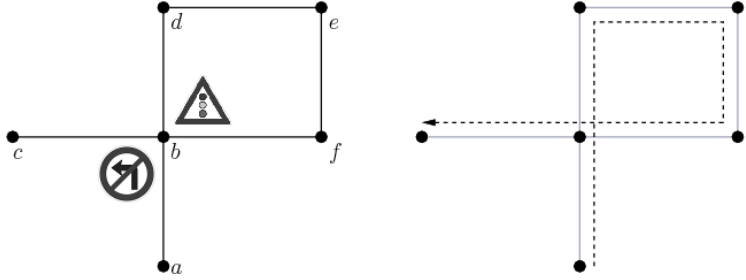, scale=0.6}}
  \vspace*{-.5ex}
  \caption{A road network containing maneuvers $M_1=(a,ab,b,bc,c)$ 
    with $\Delta(M_1)=\infty$ (prohibited left turn) and $M_2=(a,ab,b,bf,f)$ 
    with $\Delta(M_2) = 1$ (right turn traffic lights delay). All edges
    have weight 1. The penalized weight of the walk $(a,ab,b,bc,c)$ is $2 + 
    \infty$, the penalized weight of the walk $(a,ab,b,bf,f,fe,e,ed,d,db,b,bd,
    c)$ is $6 + 1$. Therefore the optimal walk (with respect to the 
    penalized weight) from $a$ to $c$ is $(a,ab,b,bd,d,de,e,ef,f,fb,b,bc,c)$ 
    with the penalized weight $6 + 0$.}
  \label{fig:maneuver}
\end{figure}

Within a road network, only valid walks (Definition~\ref{def:valid_walk})
are allowed further, and the distance from $u$ to $v$, $\delta_w^{\cal{M}}
(u,v)$, is the minimum penalized weight (Definition~\ref{def:penalized_weight})
of a valid walk $P=(u, \ldots, v) \subseteq G$; such a walk $P$ is then 
called \emph{optimal with respect to the penalized weight}. If there is no 
such walk, then $\delta_w^{\cal{M}}(u,v) = \infty$. See~Fig.~\ref{fig:maneuver}.

Motivation for the required properties i.--iii. in 
Definition~\ref{def:road_network} is of both natural and practical character:
As for i., it simply says that no two restricted maneuvers are in a
conflict (that is no route planning deadlocks). Point ii. concerning only 
negative maneuvers is needed for a fast query algorithm, and it is indeed 
a~natural requirement (to certain extent, overhanging maneuvers can be modeled 
without overhangs). We remark that other studies usually allow no negative 
maneuvers at all. Finally, iii. states that no negative maneuvers can result 
in a negative overall cost of any walk -- another very natural property.
In informal words, a negative penalty of a maneuver somehow ``cannot 
influence'' suitability of a~route before entering and after exiting the 
maneuver.

\subsection{Strongly Connected Road Network}
\label{sec:connectivity}

The traditional graph theoretical notion of strong connectivity also needs to
be refined, it must suit our road networks to dismiss possible route 
planning traps now imposed by maneuvers. 

First, we need to define a notion of a \emph{``context''} of a vertex $v$ 
in $G$ -- a~maximal walk in $G$ ending at $v$ such that it is a proper 
prefix of a maneuver in $\cal M$, or $\emptyset$ otherwise. A set of all such
walks for $v$ is denoted by ${\cal{X}}_{\cal{M}}$. For example, on the road 
network depicted on Fig.~\ref{fig:maneuver}, 
${\cal{X}}_{\cal M}(b)=\{(a,b),\emptyset\}$. More formally: 

\begin{definition}
  \label{def:context}
  Let $\cal M$ be a set of maneuvers. We define 
  $$ {\cal{X}}_{{\cal{M}}}(v) \stackrel{\textrm{\tiny{def}}}{=} 
  \big\{ X \in \mathit{Prefix}^{<}({\cal M}) \, | (v) \in 
  \mathit{Suf\!fix}(X)\big\} \cup \{\emptyset\} $$
  $$ \mathit{Prefix}^{<}(M) \stackrel{\textrm{\tiny{def}}}{=} \mathit{Prefix}(M)
  \setminus \{M\}, \quad \mathit{Prefix}^{<}({\cal M}) 
  \stackrel{\textrm{\tiny{def}}}{=} \bigcup\nolimits_{M\in 
    \cal M}\mathit{Prefix}^{<}(M). $$
  This ${\cal{X}}_{{\cal{M}}}(v)$ is the {\em maneuver-prefix set} at $v$,
  that is\ the set of all proper prefixes of walks from $\cal M$ that end
  right at~$v$, including the mandatory empty walk.
  An element of ${\cal{X}}_{{\cal{M}}}(v)$ is called a {\em context} of the
  position $v$ within the road network.
\end{definition}

\emph{The reverse graph} $G^R$ of $G$ is a graph on the same set of vertices 
with all of the edges reversed. Let $(G,w,{\cal{M}})$ be a road network, 
a \emph{reverse road network} is defined as $(G^R,w^R,{\cal{M}}^R)$, where 
$w^R: E(G^R) \mapsto \mathbb{R}_0^+$, $\forall (u,v) \in E(G^R):\, w^R(u,v)
= w(v,u)$ and ${\cal{M}}^R = \{ M^R | M \in {\cal{M}}\}$, $\forall M^R \in 
{\cal{M}}^R:\, \Delta(M^R) = \Delta(M)$.

\begin{definition}
  \label{def:Mconnectivity}
  A road network $(G,w,{\cal M})$ is \emph{strongly connected} if, for every 
  pair of edges $e=(u',u),\>f=(v,v') \in E(G)$ and for each possible 
  context $X=X_1\cdot\,e\in{\cal X}_{\cal M}(u)$ of $u$ in $G$ and each one 
  of $v$ in $G^R$, that is \ $Y^R=Y_1^R.\,f^R \in 
  {\cal X}_{{\cal M}^R}(v)$, there exists a valid walk starting with $X$ and 
  ending with~$Y$.
\end{definition}

We remark that Definition~\ref{def:Mconnectivity} naturally corresponds to 
strong connectivity in an~amplified road network modeling the maneuvers 
within underlying graph.

\section{Route Planning Queries}
\label{sec:query}

At first, let us recall classical Dijkstra's algorithm \cite{Dijkstra1959}.
It solves SPSP\footnote{Given a graph and two vertices find a shortest path 
from one to another.} problem a graph $G$ with a~non-negative weighting $w$ 
for a pair $s,t\in V(G)$ of vertices.

\begin{itemize}
\parskip 3pt
\item The algorithm maintains, for all $v \in V(G)$, a 
  {\em (temporary) distance estimate} of the shortest path from $s$ to $v$ 
  found so far in $d[v]$, and a predecessor of $v$ on that path in $\pi[v]$. 

\item The scanned vertices, that is those with $d[v] = \delta_w(s,v)$, 
  are stored in the set $T$; and the reached but not yet scanned vertices, 
  that is those with $\infty >d[v] \geq \delta_w(s,v)$, are stored in the set $Q$. 

\item The algorithm work as follows: it iteratively picks a vertex $u \in Q$ 
  with minimum value $d[u]$ and relaxes all the edges $(u,v)$ leaving $u$.
  Then $u$ is removed from $Q$ and added to $T$. {\em Relaxing} an edge $(u,v)$
  means to check if a shortest path estimate from $s$ to $v$ may be improved 
  via $u$; if so, then $d[v]$ and $\pi[v]$ are updated. Finally, $v$ is added 
  into $Q$ if is not there already. 

\item The algorithm terminates when $t$ is scanned or when $Q$ is empty.
\end{itemize}

Time complexity depends on the implementation of $Q$; such as it is 
${\cal{O}}(|E(G)| + |V(G)|\log|V(G)|)$ with the Fibonacci heap.

\subsection{$\cal{M}$-Dijkstra's Algorithm}
\label{sec:Mdijkstra}

In this section we will briefly sketch the core ideas of our natural extension 
of Dijkstra's algorithm. We refer a reader to Algorithm \ref{alg:m-dijkstra}
for a full-scale pseudocode of this $\cal{M}$-Dijkstra's algorithm.

\begin{enumerate}
\parskip 3pt
\item Every vertex $v\in V(G)$ scanned during the algorithm is considered 
  together with its context $X \in {\cal X}_{\cal M}(v)$ 
  (Definition~\ref{def:context}); that is\ as a pair $(v,X)$.
  The intention is for $X$ to record how $v$ has  been reached in the 
  algorithm, and same $v$ can obviously be reached and scanned more than 
  once, with different contexts.
  For instance, $b$ can be reached with the empty or $(a,b)$ contexts 
  on the road network depicted on Fig.~\ref{fig:maneuver}.

\item Temporary distance estimates are stored in the algorithm as $d[v,X]$
  for such vertex-context pairs $(v,X)$. At each step the 
  algorithm selects a next pair $(u,Y)$ such that it is minimal with respect 
  to the following partial order $\le_{\cal M}$.

  \begin{remark}{Partial order $\le_{\cal M}$:}
  \label{rmk:order}
  $$
  (v_1,X_1) \le_{\cal M} (v_2,X_2) 
  \stackrel{\textrm{\tiny{def}}}{\Longleftrightarrow}
  \left.\begin{array}{ll}\big(d[v_1,X_1] < d[v_2,X_2] &\lor\\[2pt] 
      (d[v_1,X_1] = d[v_2,X_2] &\land \> X_1 \in \mathit{Suf\!fix}(X_2)\,)\big).
    \end{array}\right.
  $$
  \end{remark}

\item Edge relaxation from a selected vertex-context pair $(u,Y)$ respects
  all maneuvers related to the context $Y$ (there can be more such maneuvers).
  If one of them is restricted, then only its unique (cf.\ Definition
  \ref{def:road_network},\,i.) subsequent edge is taken, cf.\ 
  Algorithm~\ref{alg:m-dijkstra}, \textsc{RestrictedDirection}.

  Otherwise, every edge $f=(u,v)$ is relaxed such that the distance estimate 
  at $v$ -- together with its context as derived from the concatenation 
  $(Y.\,f)$ -- is (possibly) updated with the weight $w(f)$ plus the sum of 
  penalties of all the maneuvers in $(Y.\,f)$ ending at $v$, 
  cf.~Algorithm~\ref{alg:m-dijkstra}, \textsc{Relax}.

\item If an edge relaxed is the first one of a negative maneuver,
  a specific process is executed before scanning the next vertex-context pair.
  See below.
\end{enumerate}

\subsection{Processing Negative Maneuvers} 
\label{sec:process_negative_maneuvers}

Note that the presence of a maneuver of negative penalty
{\em may violate} the basic assumption of ordinary Dijkstra's algorithm;
that relaxing an edge never decreases the nearest temporary distance
estimate in the graph. An example of such a violation can be seen in
Fig.~\ref{fig:negative-maneuver}, for instance, at vertex $v_5$ which would 
not be processed in its correct place by ordinary Dijkstra's algorithm.
That is why a negative maneuver $M$ must be processed by $\cal{M}$-Dijkstra's 
algorithm at once -- whenever its starting edge is relaxed, 
cf.~Algorithm~\ref{alg:m-dijkstra}, \textsc{ProcessNegative}.

Suppose that an edge $f=(u,v)$ is relaxed from a selected vertex-context pair 
$(u,X)$ and there is a negative maneuver $M = (v_0,f_1,v_1,\ldots,v_{n-1},f_n,
v_n)$, $u=v_0$, $v=v_1$ starting with $f$ (that is $f=f_1$), processing negative
maneuver $M$ works as follow:

\begin{figure}[t]
  \centering
  \centerline{\epsfig{file=./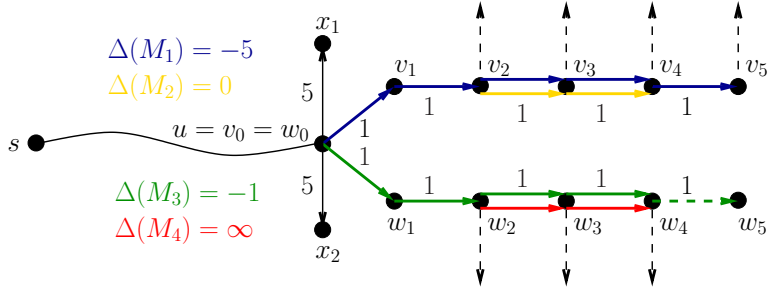, scale=0.6}}
  \caption{A road network containing two negative maneuvers,
    $M_1=(v_0,\ldots,v_5)$ and $M_3=(w_0,\ldots,,w_5)$,
    a restricted maneuver $M_2=(v_2,v_3,v_4)$, and a prohibited 
    maneuver $M_4=(w_2,w_3,w_4)$.
    When $u$ is being processed (with its implicit context),
    $x_1,x_2$ and $v_1,w_1$ are relaxed normally. Furthermore,
    negative maneuver processing is executed for both $M_1$ and $M_3$.
    As a result, $v_5$ will be immediately reached and inserted to $Q$ with
    distance estimate equal to that of $u$ which is less than those of $x_1,
    x_2$ (5 from $u$) and of $v_1,w_1$ (1 from $u$). On the other hand,
    $w_5$ will not be reached in the process because the distance estimate 
    of $w_4$ bounces to $\infty$ while handling $M_4$.}
  \label{fig:negative-maneuver}
\end{figure}

\begin{enumerate}
\parskip 2pt
\item Vertex-context pairs $(v_i,X_i), 0 \le i \le n$ along $M$ are 
  scanned one by one towards the end of $M$. The other vertices leaving 
  these $v_i$ are ignored.

\item Scanned vertex-context pairs are added to $Q$ and their distance 
  estimates are updated, but none of them is added into $T$. They must be 
  properly scanned during the main loop of the algorithm.

\item This process terminates when the end $(v_n,X_n)$ is reached or the 
  distance estimate of some $(v_i,X_i)$ bounces to $\infty$ (that is there is a 
  prohibited maneuver ending at $v_i$) or when some restricted maneuver 
  forces us to get off $M$ (and thus $M$ cannot be completed).
\end{enumerate}

\begin{algorithm}[H]
\caption{~$\cal{M}$-Dijkstra's Algorithm}
\label{alg:m-dijkstra}
\begin{algorithmic}[1]
\smallskip
\REQUIRE A proper road network $(G,w,{\cal M})$ and vertices $s,t \in V(G)$.
\ENSURE A valid walk from $s$ to $t$ in $G$ optimal with respect to the 
penalized weight.
\smallskip
\end{algorithmic}
\textsc{${\cal{M}}$-Dijkstra}$(G,w,{\cal{M}},s,t)$
\begin{algorithmic}[1]
  \small
  \FORALL[\hfill /* Initialization. */]{$v \in V(G)$, $X \in 
    {\cal X}_{\cal M}(v)$}
    \STATE $d[v,X] \leftarrow \infty;~ \pi[v,X] \leftarrow \bot$
  \ENDFOR
  \STATE $d[s,\emptyset] \leftarrow 0$;~ $Q \leftarrow \{(s,\emptyset)\}$;~ $T 
  \leftarrow \emptyset$
  \OLIF{$(s) \in {\cal{M}}$}{
    $d[s,\emptyset] \leftarrow d[s,\emptyset] + \Delta(s)$}
  \medskip 
  \vskip 0pt
  \COMMENT{/* The main loop starts at $(s,\emptyset)$ and terminates
    when either all reachable vertex-\\ \hfill context pairs have been scanned
    or when $t$ is reached with some of its contexts. */}
  \smallskip
  \WHILE{$Q \neq \emptyset \land [\not\exists\, X \in {\cal{X}}_{\cal{M}}(t)$ 
    s.t. $(t,X) \in T]$}
    \smallskip
    \STATE $(u,X) \leftarrow  \min_{\le_{\cal M}}(Q)$;~$Q \leftarrow Q \setminus
    \{(u,X)\}$
    \COMMENT{\hfill /* Recall $\le_{\cal{M}}$ (Remark \ref{rmk:order}) */}
    \smallskip
    \STATE $F \leftarrow \textsc{RestrictedDirection}(u,X)$
    \COMMENT{\hfill /* Possible restricted dir.\ from $u$. */}
    \OLIF{$F = \emptyset$}{
      $F \leftarrow \{ (u,v) \in E(G) \, | \, v \in V(G)\}$}
    \smallskip
    \FORALL{$ f=(u,v) \in F$}
      \STATE \textsc{Relax}$(u,X,f,v)$
      \FORALL{$M = (u,f,v,\ldots) \in {\cal{M}}$ s.t.
		 $\Delta(M)<0 \,\wedge\, |E(M)| > 1$}
        \STATE \textsc{ProcessNegative}$\,(X,M)$ 
      \ENDFOR
    \COMMENT{\hfill /* Negative man.\ starting with $f$ are processed 
      separately. */}
    \ENDFOR
    \STATE $T \leftarrow T \cup \{(u,X)\}$ 
  \ENDWHILE
  \STATE \textsc{ConstructWalk}$\,(G,d,\pi)$
    \COMMENT{\hfill /* Use ``access'' information stored in $\pi[v,X]$. */}
    \smallskip
\end{algorithmic}
\end{algorithm}

\vspace{-4ex}
\small{
\noindent
\textsc{LongestPrefix}$\,(P) :\> \textrm{a walk } P' \subseteq G$
\begin{algorithmic}[1]
    \item[]
    \COMMENT{\hfill /* The longest (proper) prefix of some maneuver contained
      as a suffix of $P$ */}\vskip 0pt
    \smallskip
    \STATE $\begin{array}{l} P' \leftarrow \max_{\subseteq}       
      \big[\,(\mathit{Suf\!fix}(P) \cap \mathit{Prefix}^<({\cal{M}})) 
      \cup \{\emptyset\}\big] \\ ~~~~~~~~~~~~~~~~~~~
      \mbox{where }\mathit{Prefix}^{<}({\cal M}) 
      \stackrel{\textrm{\tiny{def}}}{=} \bigcup\nolimits_{M\in 
        \cal M}\mathit{Prefix}(M)\setminus\{M\} \end{array}$
    \RETURN $P'$
    \smallskip
\end{algorithmic}

\noindent
\textsc{RestrictedDirection}$(u,X) :\> F\subseteq E(G)$
\begin{algorithmic}[1]    
    \item[]
    \COMMENT{\hfill /* Looking for edge $f$ leaving $u$ that follows
			in a restricted man.\ in context $X$.*/}\vskip 0pt
    \STATE $\begin{array}{l}F \leftarrow \{ f = (u,v) \in E(G) \, | \, 
        ~\exists \textrm{ restricted } R \in {\cal{M}}: \\ ~~~~~~~~~~~~~~~~~~~
        E(X)\cap E(R)\neq \emptyset \, \land 
        \mathit{Suf\!fix}(X.\, f) \cap \mathit{Prefix}(R)\neq \emptyset\}
        \end{array}$
    \RETURN $F$
\bigskip
\end{algorithmic}

\noindent
\textsc{Relax}$\,(u,X,f,v)$
\begin{algorithmic}[1]
    \vspace*{-\baselineskip}    
    \item[]
    \COMMENT{\hfill /* Relaxing an edge $f$ from vertex $u$ with context $X$. 
      */}\vskip 0pt
    \STATE $\delta \leftarrow w(f) + \sum_{N \in {\cal{N}}} \Delta(N)$ ~where 
    ${\cal{N}} = {\cal{M}} \cap \mathit{Suf\!fix}(X .\,f)$      
    \STATE $X' \leftarrow \textsc{LongestPrefix}(X . f)$
    \IF{$d[u,X] + \delta < d[v,X']$}
    \STATE $Q \leftarrow Q \cup \{(v,X')\}$;~
    $d[v,X'] \leftarrow d[u,X] + \delta$;~
    $\pi[v,X'] \leftarrow (u,X)$
    \ENDIF
    \smallskip
\end{algorithmic}

\noindent
\textsc{ProcessNegative}$(X,\, M = (v_0,e_1,\ldots,e_n,v_n)\,)$
\begin{algorithmic}[1]    
  \STATE $i \leftarrow 1; \, X_0 \leftarrow X; \, F \leftarrow \emptyset$ 
  \COMMENT{\hfill /* Relaxing sequentially all the edges of $M$. */}
  \WHILE{$i \leq n \land d[v_{i-1},X_i] <\infty \land F = \emptyset$}
      \STATE \textsc{Relax}$(v_{i-1},X_{i-1},e_i,v_i)$
      \STATE $X_{i}\leftarrow \textsc{LongestPrefix}(X_{i-1} . e_i)$;
      $F \leftarrow \textsc{RestrictedDirection}(v_i,X_i)
		\setminus\{e_{i+1}\}$
      \STATE $i \leftarrow i + 1$
  \ENDWHILE
\end{algorithmic}
}

\subsection{Correctness and Complexity Analysis} 
\label{sec:analysis}

Assuming validity of Definition \ref{def:road_network}\, ii. in a~proper road 
network, correctness of above $\cal{M}$-Dijkstra's algorithm can be argued 
analogously to a traditional proof of Dijkstra's algorithm. Hereafter, the 
time complexity growth of the algorithm depends solely on the number of 
vertex-context pairs.

\begin{theorem}
  \label{thm:MDijkstra}
  Let a proper road network $(G,w,\cal{M})$ and vertices $s,t \in
  V(G)$ be given. ${\cal{M}}$-Dijk\-stra's algorithm (Algorithm~\ref{alg:m-dijkstra})
  computes a valid walk from $s$ to $t$ in $G$ optimal with respect to the
  penalized weight, in time ${\cal O} \big(c^2_{\cal M}|E(G)| + c_{\cal M}|V(G)|
  \log(c_{\cal M}|V(G)|) \big)$ where $c_{\cal M} =\max_{v\in V(G)}| 
  \{M\in{\cal{M}}\, |\, v\in V(M)\}|$ is the maximum number of maneuvers per 
  vertex.
\end{theorem}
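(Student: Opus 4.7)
My plan is to reduce the correctness question to a standard Dijkstra argument on an \emph{amplified state graph} $\widehat G$ whose vertices are precisely the vertex-context pairs $(v,X)$ with $X\in{\cal X}_{\cal M}(v)$. An arc of $\widehat G$ from $(u,X)$ to $(v,X')$ exists whenever $f=(u,v)\in E(G)$ is permitted from context $X$ (either $X$ contains no restricted prefix, or $f$ is the unique continuation dictated by that restricted maneuver, which is single-valued thanks to Definition~\ref{def:road_network}\,i.), with $X'=\textsc{LongestPrefix}(X.\,f)$ and weight $w(f)+\sum_{N\in\cal N}\Delta(N)$ for ${\cal N}={\cal M}\cap\mathit{Suf\!fix}(X.\,f)$. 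The first step is to check, by induction on walk length, that valid walks $(s,\dots,t)$ in $(G,w,\cal M)$ are in weight-preserving bijection with walks from $(s,\emptyset)$ in $\widehat G$, where arc weights sum to the penalized weight; property~iii.\ of Definition~\ref{def:road_network} guarantees non-negativity along any such walk, which is precisely the hypothesis needed to apply Dijkstra's argument to $\widehat G$.

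Once this reduction is in place, correctness follows from the classical monotone-scan invariant on $\widehat G$. The one subtlety is that an arc whose weight absorbs a negative penalty may be strictly cheaper than the distance of the scanning vertex, so I would exploit property~ii.: since no two negative maneuvers overhang, the moment the first edge of a negative maneuver $M$ is relaxed the remainder of $M$ is unambiguously determined and its interior vertices cannot be reached more cheaply via any other, yet-undiscovered negative effect. Hence \textsc{ProcessNegative} can walk along $M$ eagerly, inserting each $(v_i,X_i)$ into $Q$ already credited with its negative penalty, and the main loop subsequently scans these pairs in the correct $\le_{\cal M}$ order. I would argue that the suffix-based tiebreaker in $\le_{\cal M}$ is precisely what enforces the right scan order when distance estimates coincide, so that no pair is scanned before all arcs of $\widehat G$ into it have been relaxed; prohibited maneuvers harmlessly abort \textsc{ProcessNegative} via the $\infty$ check, and an interior restricted maneuver aborts it via \textsc{RestrictedDirection}.

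For the complexity, I would bound the size of $\widehat G$ directly. A non-empty context at $v$ is, by Definition~\ref{def:context}, a proper prefix of some maneuver passing through $v$, so $|{\cal X}_{\cal M}(v)|\le c_{\cal M}+1$ and $|V(\widehat G)|=O(c_{\cal M}|V(G)|)$; each $f\in E(G)$ spawns at most one arc of $\widehat G$ per context of its tail, hence $|E(\widehat G)|=O(c_{\cal M}|E(G)|)$. The local work per \textsc{Relax} call, which computes \textsc{LongestPrefix} and sums penalties over maneuvers completing at the head, is $O(c_{\cal M})$ by scanning the maneuvers incident to the vertices involved. With a Fibonacci heap for $Q$ (keyed on $\le_{\cal M}$), Dijkstra on $\widehat G$ then runs in $O\bigl(c_{\cal M}^2|E(G)| + c_{\cal M}|V(G)|\log(c_{\cal M}|V(G)|)\bigr)$, matching the claimed bound.

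The step I expect to be the main obstacle is not the complexity bookkeeping but the detailed case analysis inside \textsc{ProcessNegative} when the interior of the current negative maneuver $M$ meets other maneuvers. I would need to verify, case by case, that an interior restricted maneuver is correctly picked up by the $F$ test, that a prohibited maneuver terminating at some $v_i$ correctly flushes the distance estimate to $\infty$ and aborts, and—crucially—that the distance estimates eagerly inserted along $M$ are never later invalidated by a competing relaxation. Property~ii.\ of Definition~\ref{def:road_network} is exactly the structural hypothesis that makes this last point go through, by preventing the kind of cascaded negative effects that would otherwise require an unbounded look-ahead.
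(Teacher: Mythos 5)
Your proposal is correct and takes essentially the same route as the paper: the paper also argues a standard Dijkstra-style invariant over vertex-context pairs (your amplified graph $\widehat G$ is just an explicit packaging of this, which the paper itself alludes to after Definition~\ref{def:Mconnectivity}), justifies the eager \textsc{ProcessNegative} step via properties ii.\ and iii.\ of Definition~\ref{def:road_network}, and derives the complexity from the same counts $O(c_{\cal M}|V(G)|)$ states, $O(c_{\cal M}|E(G)|)$ relaxations, and $O(c_{\cal M})$ work per relaxation. The case analysis inside \textsc{ProcessNegative} that you flag as the main remaining obstacle is treated at a comparable (brief) level of detail in the paper as well.
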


\begin{proof}
We follow a traditional proof of ordinary Dijkstra's algorithm with a~simple 
modification -- instead of vertices we consider vertex-context pairs as in 
Definition~\ref{def:Mconnectivity} and in Algorithm~\ref{alg:m-dijkstra}.

For a walk $P$ let $\chi(P)=\max_{\subseteq} \big[\,(\mathit{Suf\!fix}(P) \cap 
\mathit{Prefix}^<({\cal{M}})) \cup \{\emptyset\}\big]$ denote the context of 
the endvertex of $P$ with respect to maneuvers $\cal M$. Let $P_x$ stand for 
the prefix of $P$ up to a vertex $x\in V(P)$. The following invariant holds at 
every iteration of the algorithm:

\begin{itemize}
\item[i.] For every $(u,X)\in T$, the final distance estimate $d[u,X]$ equals 
  the smallest penalized weight of a valid walk $P$ from $s$ to $u$ such that
  $X=\chi(P)$. Every vertex-context pair directly accessible from a member
  of~$T$ belongs to $Q$.
  \medskip
\item[ii.] For every $(v,X')\in Q$, the temporary distance estimate 
  $d[v,X']$ equals the smallest penalized weight of a walk $R$ from 
  $s$ to $v$ such that $X'=\chi(R)$ and, moreover,
  $(x,\chi(R_x))\in T$ for each internal vertex $x\in V(R)$
  (except vertices reached during \textsc{ProcessNegative}, if any).
\end{itemize}

This invariant is trivially true after the initialization. By induction we 
assume it is true at the beginning of the while loop on line~6, and line 7 is 
now being executed -- selecting the pair $(u,X)\in Q$. Then, by minimality of 
this selection, $(u,X)$ is such that the distance estimate $d[u,X]$ gives the 
optimal penalized weight of a walk $P$ from $s$ to $u$ such that 
$X=\chi(P)$. Hence the first part of the invariant (concerning $T$, line~16)
will be true also after finishing this iteration.

Concerning the second claim of the invariant, we have to examine the effect
of lines~8--15 of the algorithm. Consider an edge $f=(u,v) \in E(G)$ starting 
in~$u$, and any walk $R$ from $s$ to $v$ such that $\chi(R_u)=X$. Since 
$\chi(R)$ must be contained in $X.\,f$ by definition; it is, \textsc{Relax}, 
line 2, $\chi(R)=X'$. Furthermore, every maneuver contained in $R$ and not 
in $R_u$ must be a suffix of $X.\,f$ by definition. So the penalized weight 
increase $\delta$ is correctly computed in \textsc{Relax}, line 1. Therefore, 
\textsc{Relax} correctly updates the temporary distance estimate $d[v,X']$ 
for every such~$f$. Finally, any negative maneuver starting from $u$ along 
$f$ is correctly reached towards its end $w$ on line 13, its distance estimate
is updated by successive relaxation of its edges and, by 
Definition~\ref{def:road_network}, ii. and iii., this distance estimate of $w$ and 
its context is not smaller than $d[u,X]$; thus the second part of claimed 
invariant remains true. 

Validity of a walk is given by line 8 -- \textsc{RestrictedDirection}, that is 
enforcing entered restricted maneuvers; and line 1 in \textsc{Relax} --
$\delta$ grows to infinity when completing prohibited maneuvers, 
``if'' condition on line 3 in \textsc{Relax} is then false and therefore
prohibited maneuver cannot be contained in an optimal walk.

\medskip

Lastly, we examine the worst-case time complexity of this algorithm. We 
assume $G$ is efficiently implemented using neighborhood lists, the maneuvers 
in $\cal M$ are directly indexed from all their vertices and their number is 
polynomial in the graph size, and that $Q$ is implemented as Fibonacci heap.

\begin{itemize}
\item The maximal number of vertex-context pairs that may enter $Q$ is
  $$ m= |V(G)|+\sum_{M\in\cal M}(|M| - 1)\leq c_{\cal{M}} \cdot |V(G)| \,,$$
  and time complexity of the Fibonacci heap operations is $O(m\log m)$.
  \medskip

\item Every edge of $G$ starting in $u$ is relaxed at most those many times as 
  there are contexts in ${\cal X_M}(u)$ and edges of negative maneuvers 
  are relaxed one more time during \textsc{ProcessNegative}. Hence the 
  maximal overall number of  relaxations is
  $$ r= \sum_{u\in V(G)}|{\cal X_M}(u)|\cdot\mathit{out\mbox-deg}(u) + q
  \leq (c_{\cal{M}} + 1) \cdot|E(G)|$$
  where $q$ is the number of edges belonging to negative maneuvers.
  \medskip

\item The operations in \textsc{Relax} on line 1, \textsc{LongestPrefix} 
  as well as \textsc{RestrictedDirection} can be implemented in time 
  $O(c_{\cal{M}})$.
\end{itemize}
The claimed runtime bound follows.
\vskip 0pt
\qed
\medskip 
\end{proof}

Notice that, in real-world road networks, the number $c_{\cal M}$ of maneuvers 
per vertex is usually quite small and independent of the road network size, 
and thus it can be bounded by a reasonable minor constant. Although road 
networks in practice may have huge maneuver sets, particular maneuvers do 
not cross or interlap too much there. for example, $c_{\cal M}=5$ in the current 
OpenStreetMaps of Prague.

\subsection{Route Planning Example}
\label{sec:example}

In this section we will demonstrate $\cal{M}$-Dijkstra's algorithm on 
a road network containing maneuvers. Consider the road network depicted 
below with a weighting representing travel times. There are five 
maneuvers (their edges are depicted by dotted lines): a~traffic jam detour 
($M_1$), a forbidden passage ($M_2$), a~traffic light left turn delay 
($M_3$), a~traffic light delay ($M_4$) and a~direct to be followed ($M_5$). 

\medskip
\begin{tabular}{m{170pt}m{150pt}}
\epsfig{file=./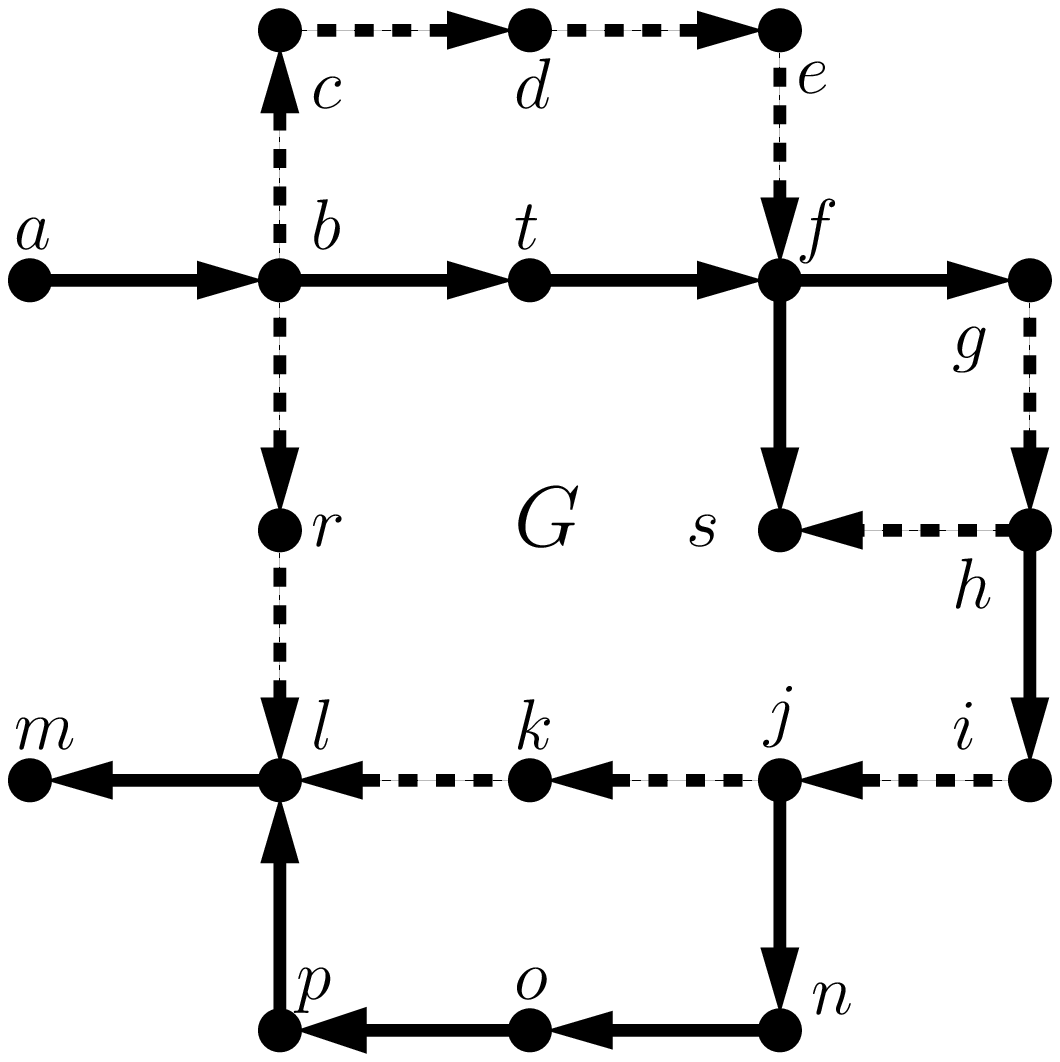, scale=0.44}
&
\hskip 5pt Road network $(G,w,{\cal{M}})$, where
\begin{itemize}
\item ~$G$ is depicted on the left,
\vskip 2pt
\item ~$\forall e \in E(G): w(e)=1$,
\vskip 2pt
\item ~${\cal{M}}=\{M_1,M_2,M_3,M_4,M_5\}$
\vskip 5pt
\begin{tabular}{l@{~}l}
$M_1 = (b,c,d,e,f)$ & $\Delta(M_1)=-3$ \\[3pt]
$M_2 = (b,r,l)$ & $\Delta(M_2)=~\infty$ \\[3pt]
$M_3 = (g,h,s)$ & $\Delta(M_3)=~~5$ \\[3pt]
$M_4 = (s)$ & $\Delta(M_4) = ~~9$ \\[3pt]
$M_5 = (i,j,k,l)$ & $\Delta(M_5)=~~0$
\end{tabular}
\end{itemize} 
\end{tabular}
\medskip

The goal of out driver is to get from $a$ to $m$ as fast as possible. Classical 
Dijkstra's algorithm finds \mbox{$P_1=(a,b,r,l,m)$} with $|P_1|_w = 4$, 
unfortunately $|P_1|_w^{\cal{M}} = \infty$ and hence it is impossible for our 
driver -- it contains a forbidden passage ($M_2$). On the other hand, 
$\cal{M}$-Dijkstra's algorithm finds $P_2=(a,b,c,d,e,f,g,h,i,j,k,l,m)$ with
$|P_2|_w^{\cal{M}} = 9$ and $P_2$ is optimal w.r.t. the penalized weight. 
Steps are outlined in Tab.~\ref{tab:example} and Fig.~\ref{fig:example}. 

\vspace{-2ex}
\begin{table}[H]
\label{tab:example}
\caption{State of selected data structures during the steps of 
Alg.~\ref{alg:m-dijkstra}. Second column shows a vertex-context pair chosen
at the beginning of the while-loop, i.e. $ \min_{\le_{\cal{M}}}(Q)$. Third column
shows its final distance estimate, i.e. $d[u,X]=\delta_w^{\cal{M}}(a,u)$ and,
finally, the last column depicts elements of the queue $Q$ at the end of 
the while-loop.}
\centering
\begin{tabular}{|@{~~}c@{~~}|@{~~}c@{~~}|@{~~}c@{~~}|@{~}c|}
\hline
\textit{Step} & $(u,X)$ \textit{(line 7)} & $d[u,X]$ & $Q$ \textit{(line 16)} \\[2pt]
\hline\hline
1 & $[a,\emptyset]$ & 0 & $[b,\emptyset]$ \\[2pt]
2 & $[b,\emptyset]$ & 1 & $[c,(b,c)]; [t,\emptyset],[r,(b,r)]$ \\[2pt]
3 & $[c,(b,c)]$ & 2 & $[t,\emptyset]; [r,(b,r)]; [d,(b,c,d)]; [e,(b,c,d,e)]; 
[f,\emptyset]$ \\[2pt]
4 & $[t,\emptyset]$ & 2 & $[(r,(b,r)]; [d,(b,c,d)]; [e,(b,c,d,e)]; [f,\emptyset]$ 
\\[2pt]
5 & $[r,(b,r)]$ & 2 & $[d,(b,c,d)]; [e,(b,c,d,e)]; [f,\emptyset]; [l,\emptyset]$ 
\\[2pt]
6 & $[f,\emptyset]$ & 2 & $[g,\emptyset]; [s,\emptyset]; [d,(b,c,d)]; 
[e,(b,c,d,e)]; [l,\emptyset]$ \\[2pt]
7 & $[d,(b,c,d)]$ & 3 & $[g,\emptyset]; [s,\emptyset]; [e,(b,c,d,e)]; 
[l,\emptyset]$ \\[2pt]
8 & $[g,\emptyset]$ & 3 & $[h,(g,h)]; [s,\emptyset]; [e,(b,c,d,e)]; [l,\emptyset]$ 
\\[2pt]
9 & $[e,(b,c,d,e)]$ & 4 & $[h,(g,h)]; [s,\emptyset]; [l,\emptyset]$ \\[2pt]
10 & $[h,(g,h)]$ & 4 & $[i,\emptyset]; [s,\emptyset]; [l,\emptyset]$ \\[2pt]
11 & $[i,\emptyset]$ & 5 & $[j,(i,j)]; [s,\emptyset]; [l,\emptyset]$ \\[2pt]
12 & $[j,(i,j)]$ & 6 & $[k,(i,j,k)]; [s,\emptyset]; [l,\emptyset]$ \\[2pt]
13 & $[k,(i,j,k)]$ & 7 & $[s,\emptyset]; [l,\emptyset]$ \\[2pt]
14 & $[l,\emptyset]$ & 8 & $[m,\emptyset]; [s,\emptyset]$ \\[2pt]
15 & $[m,\emptyset]$ & 9 & $[s,\emptyset]$ \\
\hline
\end{tabular}
\end{table}

\begin{figure}[H]
  \centering
  \epsfig{file=./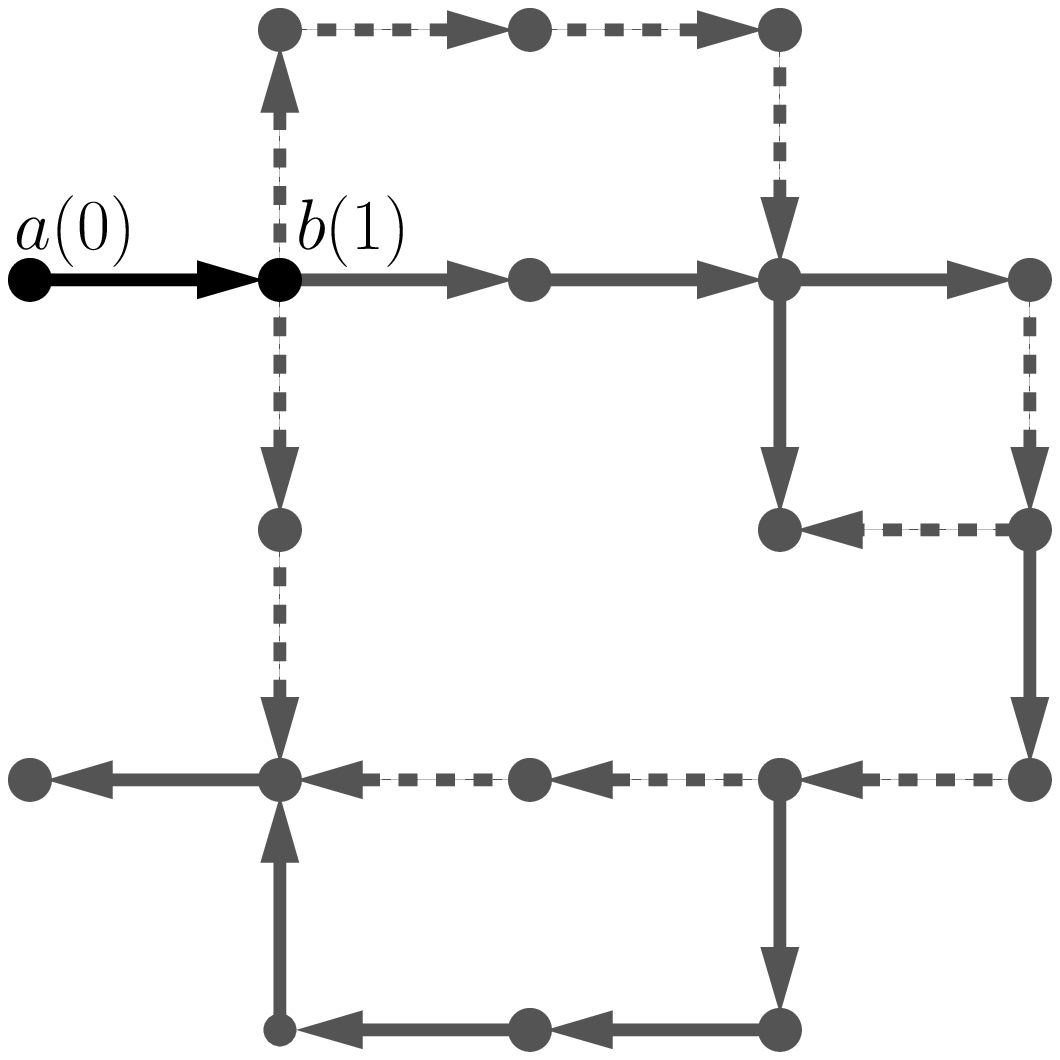, scale=0.33}
  ~
  \epsfig{file=./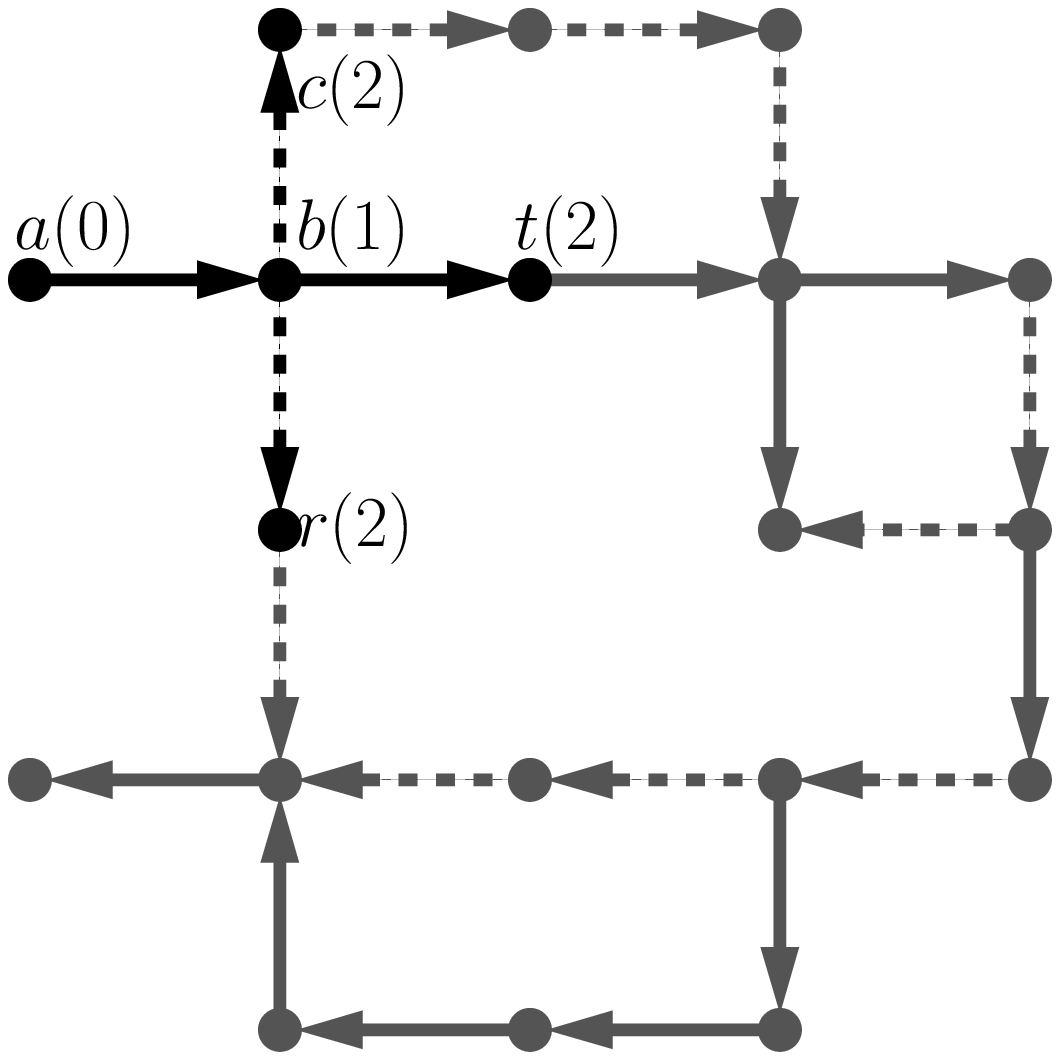, scale=0.33}
  ~
  \epsfig{file=./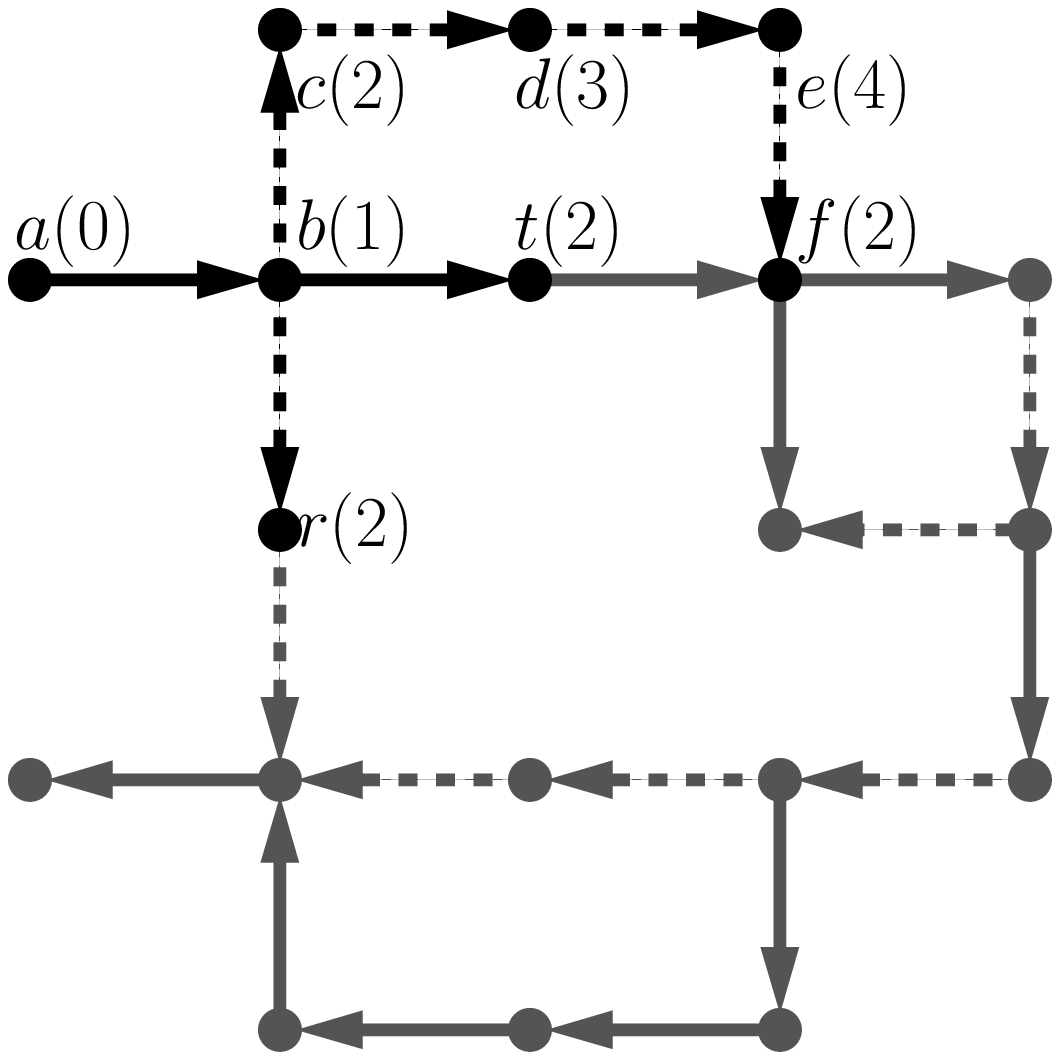, scale=0.33}
  \vskip 20pt
  \epsfig{file=./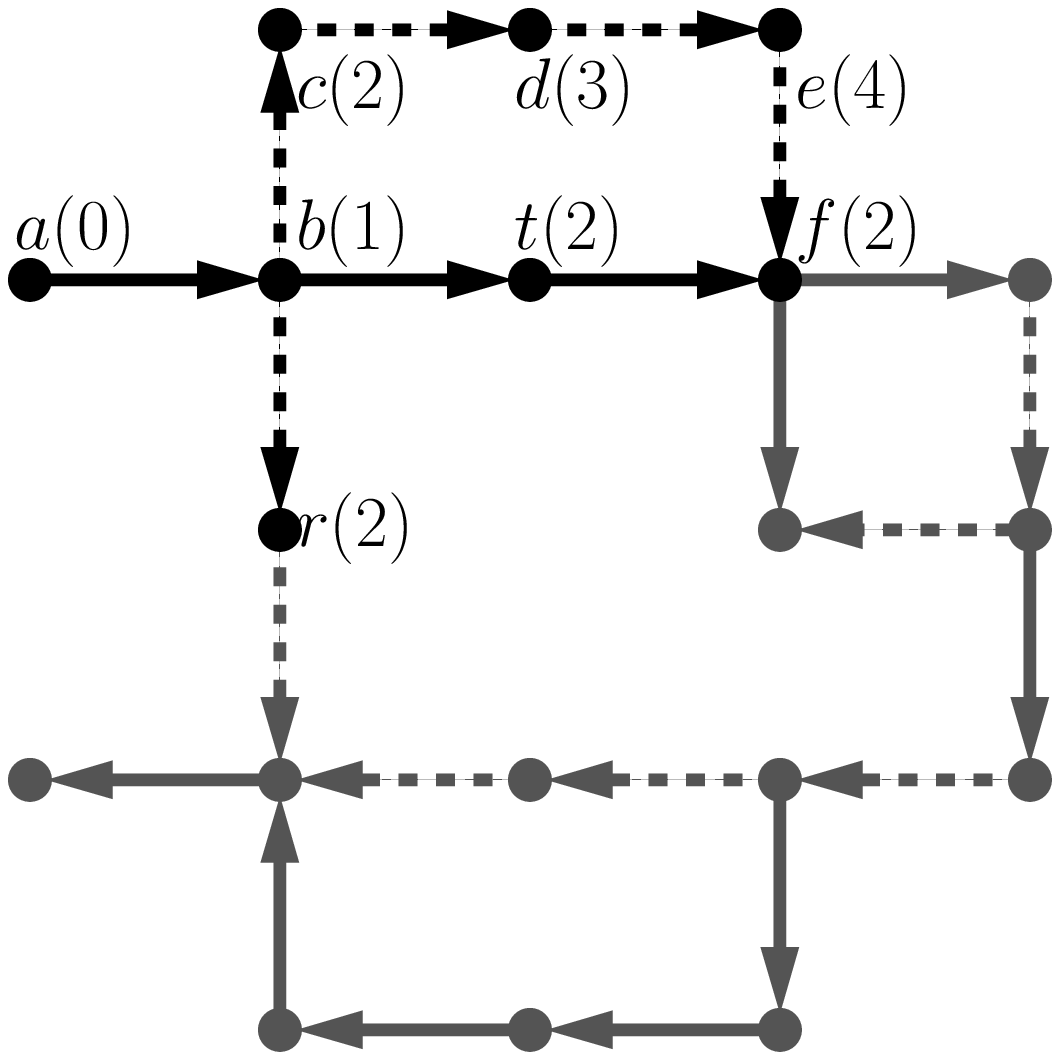, scale=0.33}
  ~
  \epsfig{file=./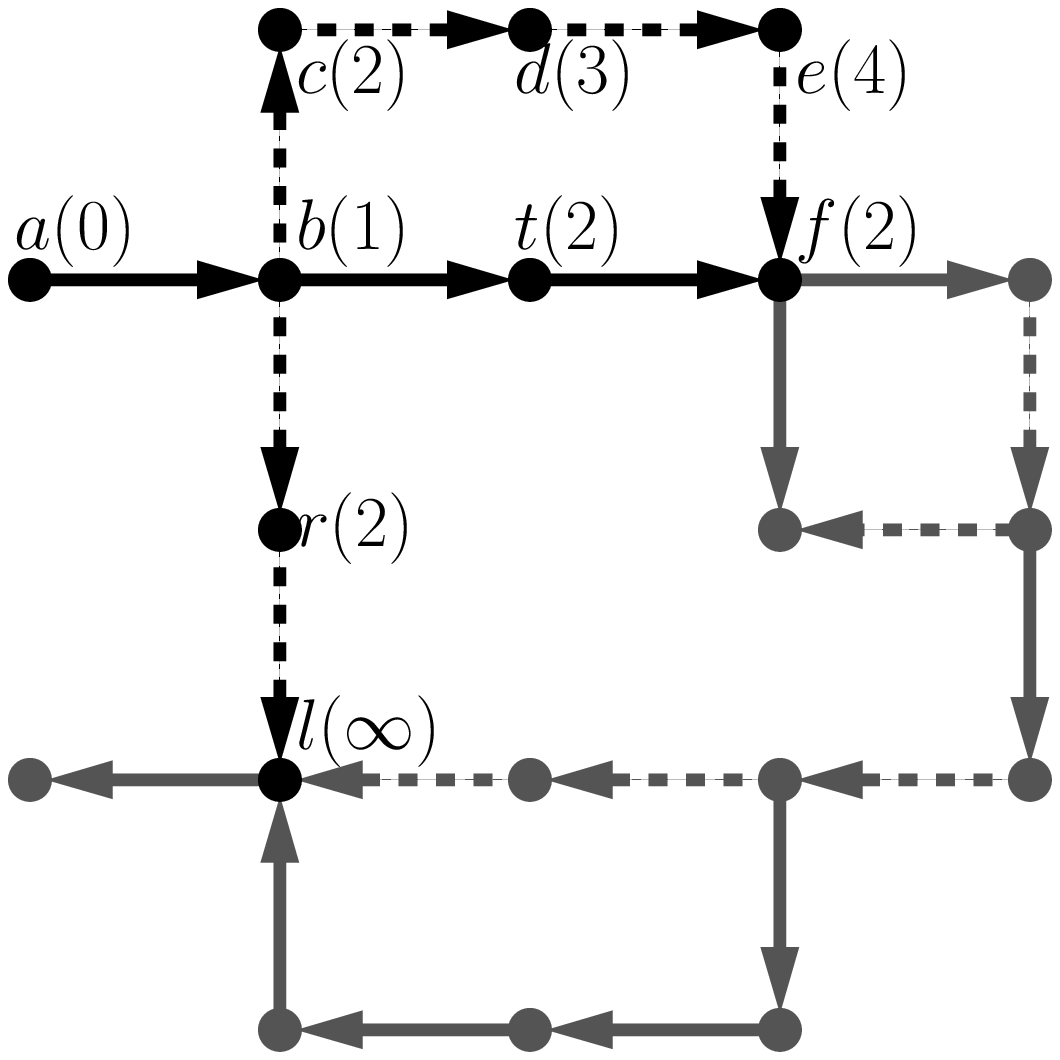, scale=0.33}
  ~
  \epsfig{file=./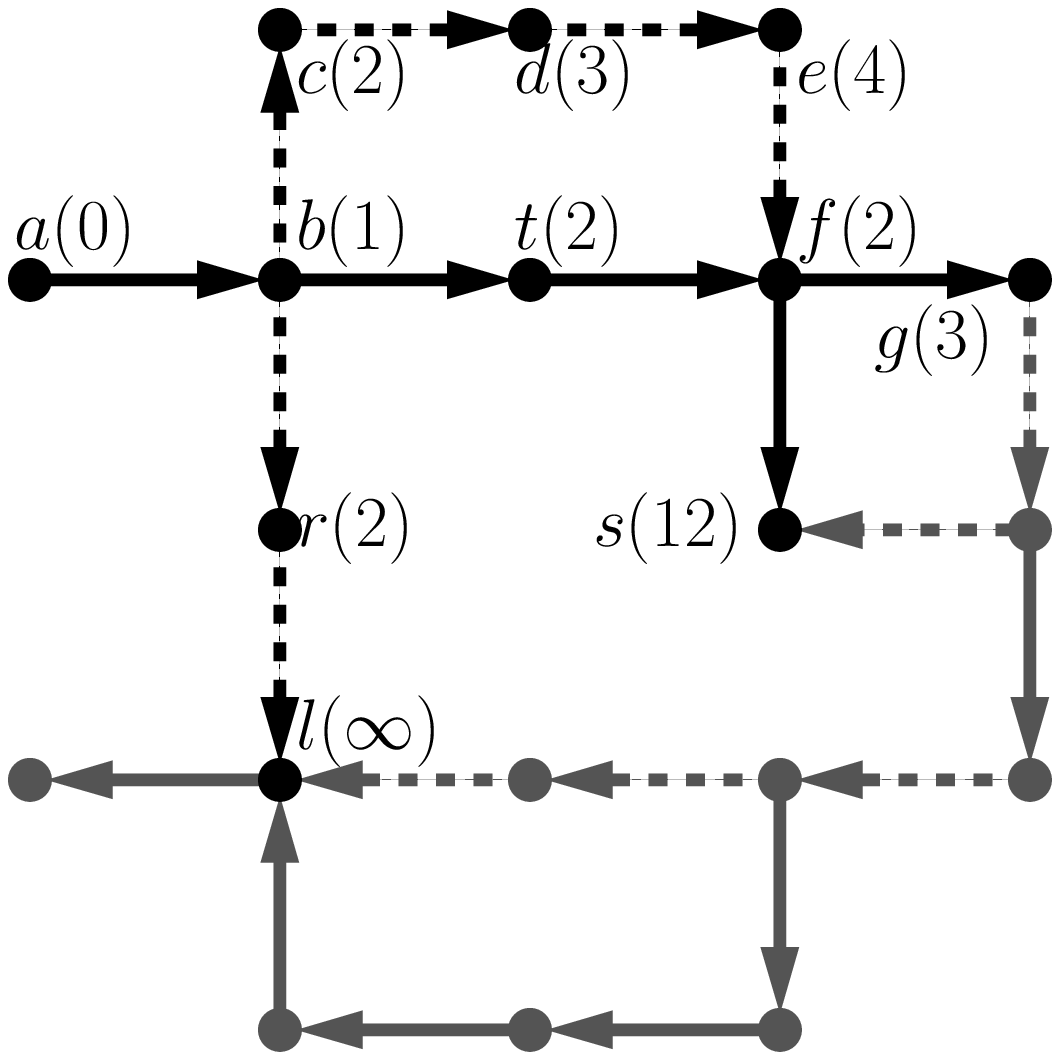, scale=0.33}
  \vskip 20pt
  \epsfig{file=./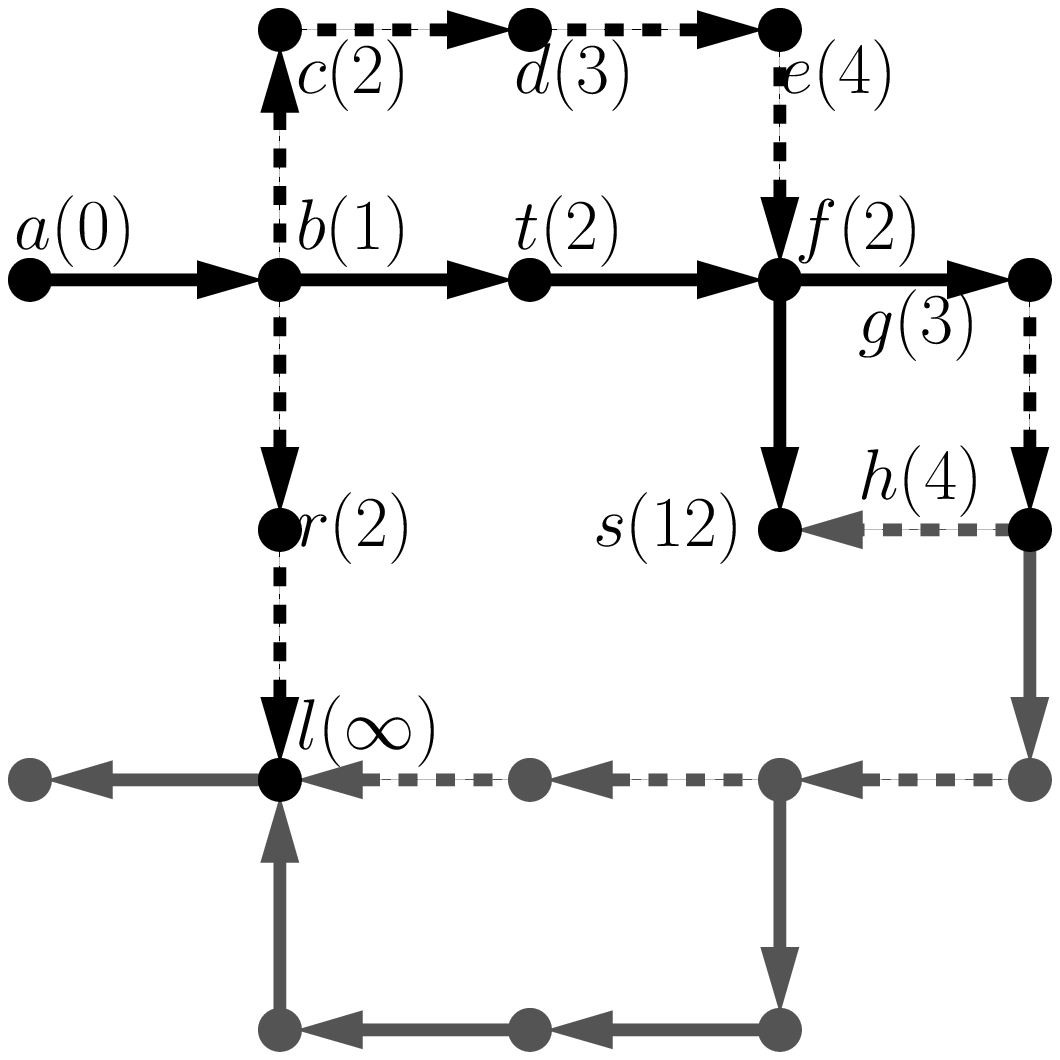, scale=0.33}
  ~
  \epsfig{file=./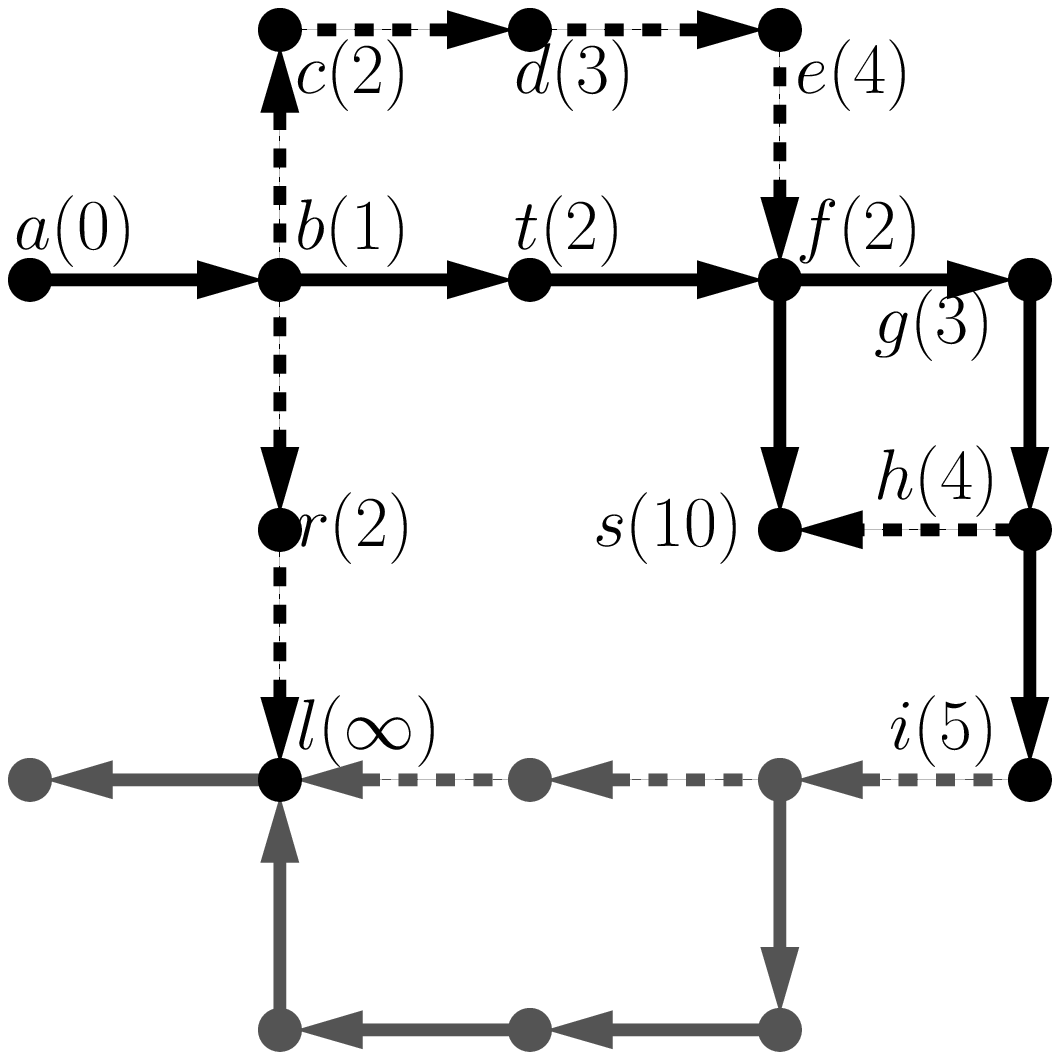, scale=0.33}  
  ~
  \epsfig{file=./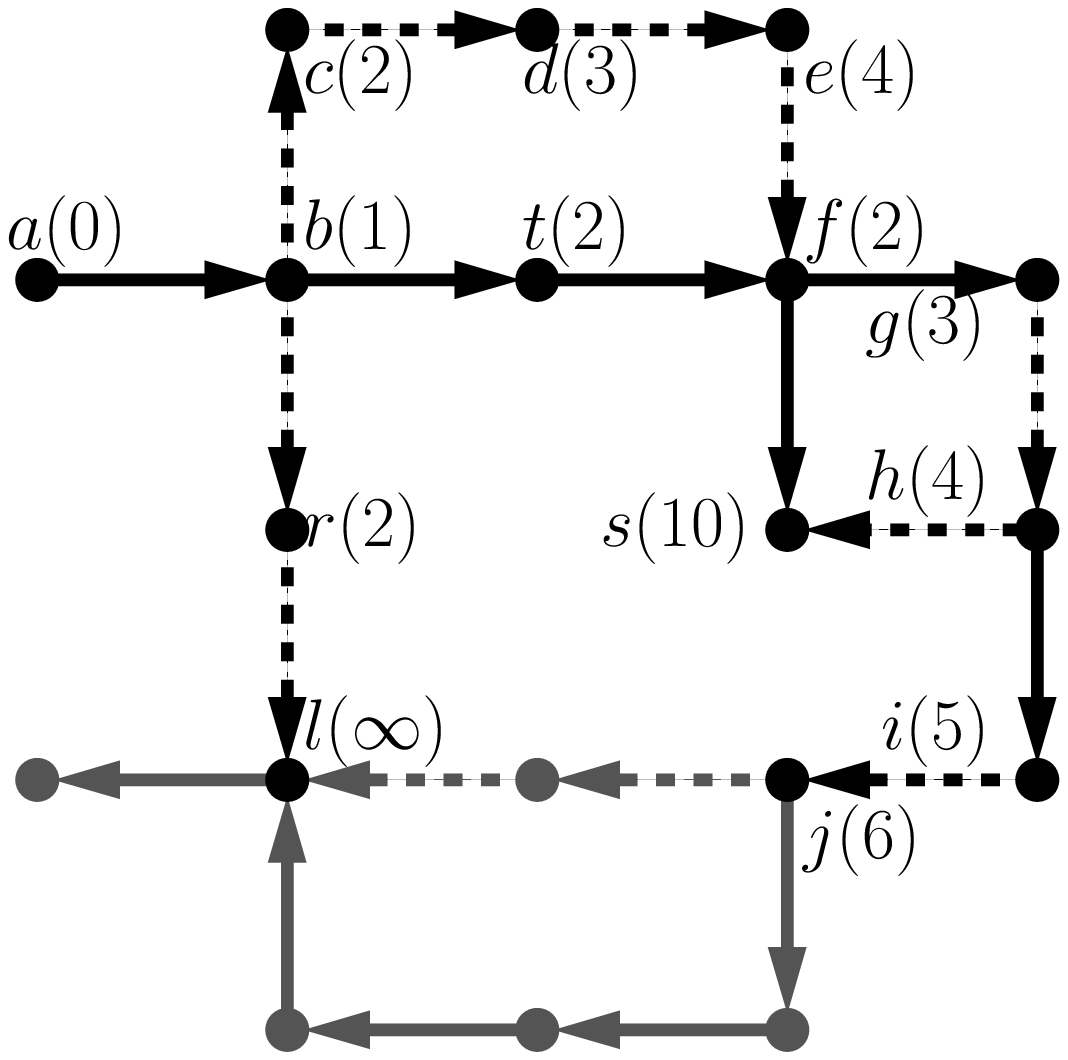, scale=0.33}
  \vskip 20pt
  \epsfig{file=./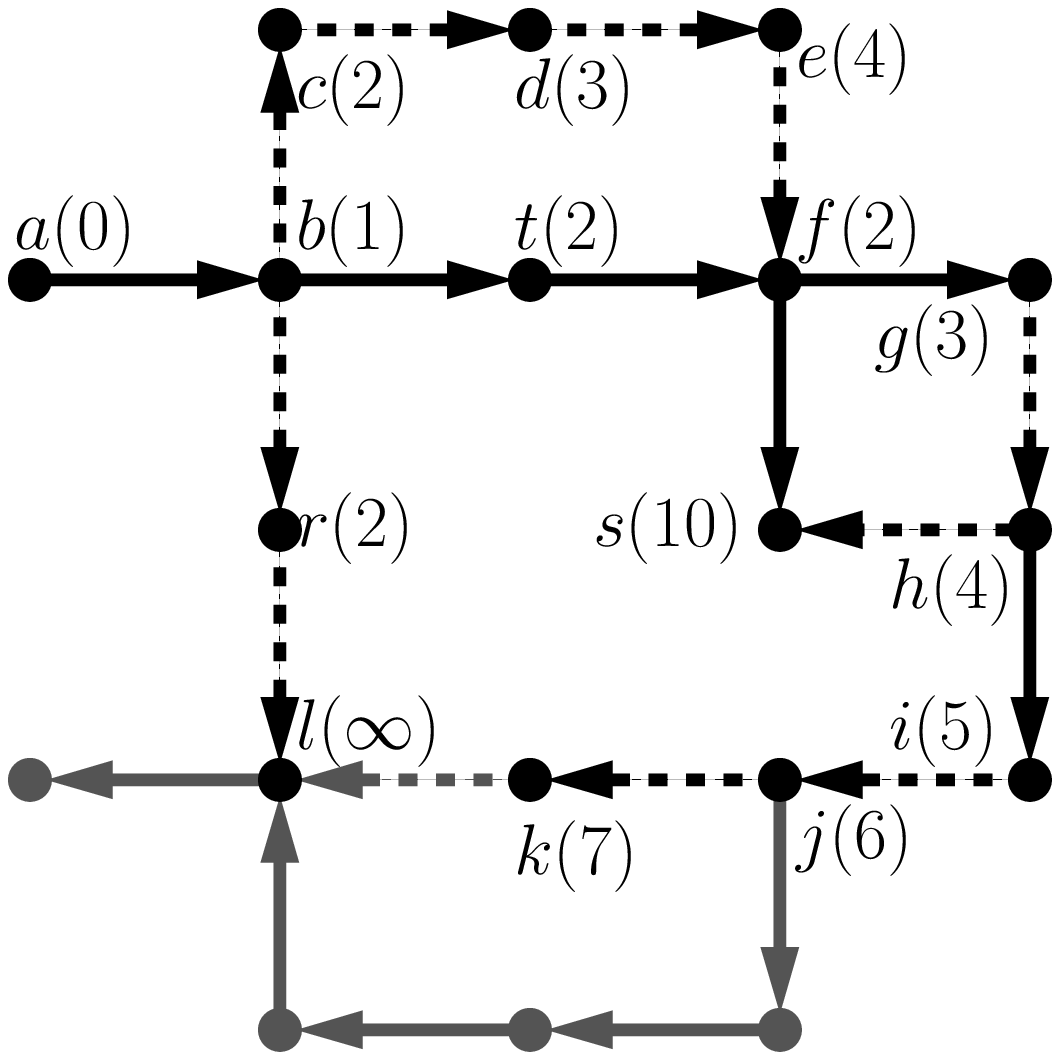, scale=0.33}
  ~
  \epsfig{file=./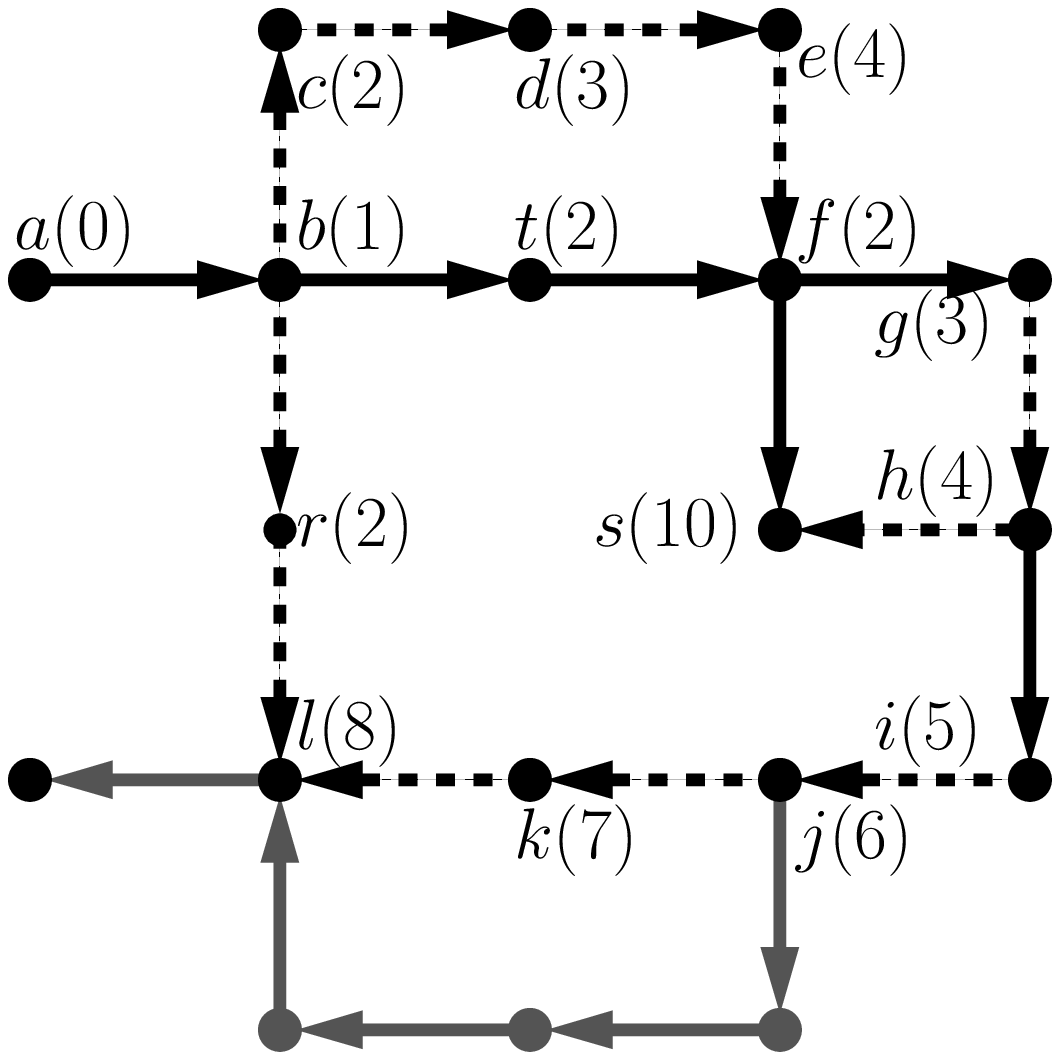, scale=0.33}  
  ~
  \epsfig{file=./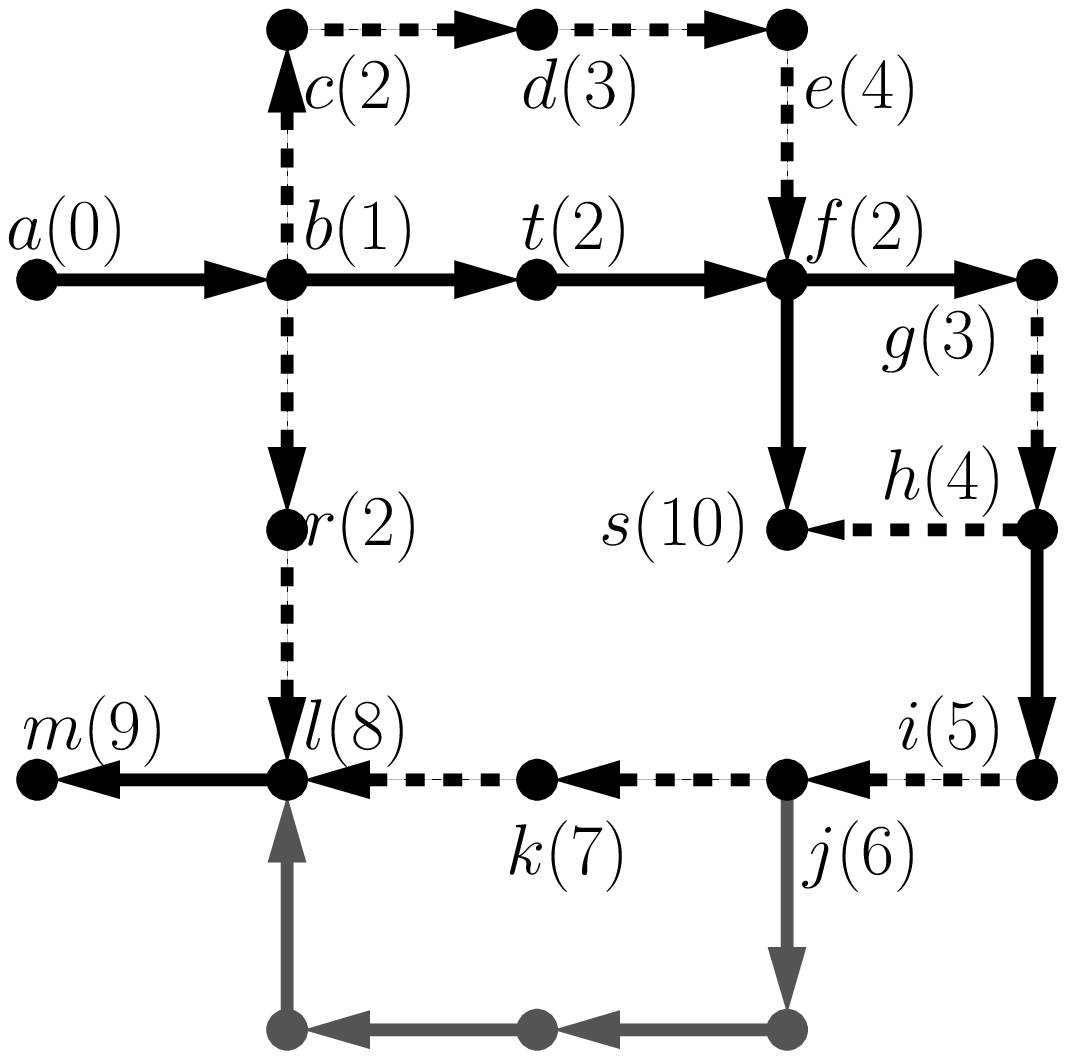, scale=0.33}
  \caption{A computation of an optimal walk w.r.t. the penalized weight from 
    $a$ to $m$ in $G$. Numbers represent the distance from the start $a$. 
    Black vertices are reached or scanned and black edges were relaxed.
    Dotted edges represent maneuver edges. 
    Steps 6 and 7 are depicted in the same figure (they are equal),
    analogously for steps 8 and 9.}
  \label{fig:example}
\end{figure}

\section{Conclusion}

We have introduced a novel generic model of maneuvers that is able to capture
almost arbitrarily complex route restrictions, traffic regulations and even
some dynamic aspects of the route planning problem. It can model anything 
from single vertices to long self-intersecting walks as restricted, negative, 
positive or prohibited maneuvers. We have shown how to incorporate this model 
into Dijkstra's algorithm so that no adjustment of the underlying road network 
graph is needed. The running time of the proposed Algorithm~\ref{alg:m-dijkstra} 
is only marginally larger than that of ordinary Dijkstra's algorithm 
(Theorem~\ref{thm:MDijkstra}) in practical networks. 

Our algorithm can be relatively straightforwardly extended to a bidirectional 
algorithm by running it simultaneously from the start vertex in the original 
network and from the target vertex in the reversed network. A termination
condition must reflect the fact that chained contexts of vertex-context pairs 
scanned in both directions might contain maneuvers as subwalks. 
Furthermore, since the A* algorithm is just an ordinary Dijkstra's algorithm 
with edge weights adjusted by a potential function, our extension remains
correct for A* if the road network is proper (Definition~\ref{def:road_network}, 
namely iii.) even with respect to this potential function.

Finally, we would like to highlight that, under reasonable assumptions, our 
model can be incorporated into many established route planning approaches.

\bibliographystyle{plain}
\bibliography{references}

\end{document}